\theoremstyle{definition}
\newtheorem{definition}{Definition}[section]
\newtheorem{example}{Example}
\newtheorem{proposition}{Proposition}[section]
\theoremstyle{remark}
\newtheorem*{remark*}{Remark}
\newtheorem*{notation}{Notation}
\numberwithin{equation}{section}
\begin{document}


\title{About Inverse 3--SAT}

\author{\name Xavier Labouze
	\email xavier.labouze@u-psud.fr\\
	\addr D\'epartement de Math\'ematiques,\\
	Facult\'e de Pharmacie, Universit\'e Paris 11,\\
	 F--92296 Ch\^atenay-Malabry Cedex, France}	


\maketitle

\begin{abstract}

	The Inverse 3--SAT problem is known to be coNP Complete: Given $\phi$
	a set of models on $n$ variables, is there a 3--CNF formula such
	that $\phi$ is its exact set of models ? An immediate candidate formula
	$F_{\phi}^{3}$ arises, which is the conjunction of all 3--clauses
	satisfied by all models in $\phi$. The (co)Inverse 3--SAT problem
	can then be resumed: Given $\phi$ a set of models on $n$ variables,
	is there a model of $F_{\phi}^{3}\notin\phi$ ? \\
	This article uses two important intermediate results: 1- The candidate
	formula can be easily (i.e. in polynomial time) transformed into an
	equivalent formula $F_{\phi}$ which is 3--closed under resolution.
	A crucial property of $F_{\phi}$ is that the induced formula $F_{\phi|I}$
	by applying any partial assignment $I$ of the $n$ variables to $F_{\phi}$
	is unsatisfiable iff its 3--closure contains the empty clause. 2-
	A set of partial assignments (of polynomial size) which subsume all
	assignments $\notin\phi$ can be easily computed.\\
	 The (co)Inverse 3--SAT question is then equivalent to decide
	whether it exists a partial assignment $I\notin\phi$ such that the
	3--closure of $F_{\phi|I}$ does not contain the empty clause.
 
\end{abstract}

\keywords{Inverse SAT, Closure under Resolution, Partial assignment}


\section{Introduction}

	The satisfiability problem has been one of the most studied problems
	in computational complexity \cite{Cook1971,Garey1979,Gu1997,Karp1972,Levin1973,Papadimitriou1994,Valiant1979}.
	Kavvadias and Sideri have shown that the Inverse 3--SAT problem is
	coNP Complete \cite{kavvadias1996}: Given $\phi$ a set of models
	on $n$ variables, is there a 3--CNF formula such that $\phi$ is
	its exact set of models ? An immediate candidate 3--CNF formula $F_{\phi}^{3}$
	arises which is the set of all 3--clauses satisfied by all models
	in $\phi$. Since $F_{\phi}^{3}$ is the most restricted 3--CNF formula
	(in term of its model set) which is satisfied by all models in $\phi$,
	the (co)Inverse 3--SAT problem can then be defined: Given $\phi$
	a set of models on $n$ variables, is there a model of $F_{\phi}^{3}\notin\phi$
	? The properties of $F_{\phi}^{3}$ will bring a new interesting way
	to solve the Inverse 3--SAT problem.
	
	In the next part of the article, all needed notations will be defined.
	In section 3, the main ideas of the algorithm presented in section 4 will be developped.

\section{Preliminaries}

\paragraph{3--CNF formula.}

	A CNF propositionnal formula $F$ is regarded in the standard way
	as a set of clauses, where each clause is regarded as a set of literals,
	and each literal as a boolean variable or its negation. Whether $x$
	is a positive or a negative literal, $\bar{x}$ denotes its complement.
	The size of a set $A$ (denoted $|A|$) is the number of its elements.
	A 3--clause is a clause of size 3. The 3--clause $c=\{x,y,z\}$ is
	denoted $(xyz)$. $c\smallsetminus\{x\}$ is the clause $(yz)$. The
	empty clause, denoted $(\varnothing)$, is equivalent to $false$.
	A 3--CNF formula is a CNF formula containing at least one 3--clause.

\paragraph{Assignment.}

	Let $F$ be a 3--CNF formula on $n$ variables $\{x_{1},x_{2}\dots x_{n}\}$.
	Each variable $x_{i}$ can be assigned to the value $v_{i}$. A (total)
	{\em assignment} of the $n$ variables is a set of $n$ values
	$\{v_{1},v_{2}\dots v_{n}\}$, where the value $v_{i}$ is assigned
	to the variable $x_{i}$. A value $v$ is equal to 0 ($false$) or
	1 ($true$), the opposite of the value $v$, $\bar{v}=1-v$. A clause
	of $F$ is satisfied when at least one of its literals is set (assigned)
	to {\em true}. $F$ is satisfiable if it exists a truth assignment
	of the $n$ variables which satisfies all its clauses. Such a truth
	assignment is called a {\em model}. A {\em partial} assignment
	on $k$ variables is the subset of a total assignment restricted to
	the values of the choosen $k$ variables ($k\le n$). 

	\begin{definition}

		Given $F$ a 3--CNF formula on $n$ variables $\{x_{1},x_{2}\dots x_{n}\}$;
		$c$, a clause in $F$; $I$, a partial assignment of $k$ variables
		among $(x_{i})\ (k\le n)$. 

		\begin{enumerate}

			\item 
			Let $F_{|I}$ be the induced formula by applying $I$ to $F$: Any
			clause that contains a literal which evaluates to $true$ under $I$
			is deleted from the formula and any literals that evaluate to $false$
			under $I$ are deleted from all clauses - the clauses that become
			empty by this deletion remain in the formula as the empty clause. 

			\item 
			Let $c_{|I}$ be the induced clause by applying $I$ to $c$: If $c$
			contains a literal which evaluates to $true$ under $I$ then $c_{|I}=true$;
			If $c$ contains a subset $A$ of literals all set to $false$ under
			$I$ then $c_{|I}=c\smallsetminus A$; If $c$ does not contain any
			literal set by $I$ then $c_{|I}=c$. 

		\end{enumerate}

	\end{definition}

\paragraph{Subsumption.}

	A clause $c$ is said to {\em subsume} a clause $d$, and $d$
	{\em is subsumed} by $c$, if the literals of $c$ are a subset
	of those of $d$ (each clause subsumes itself then). A (partial) assignment
	$I$ is said to {\em subsume} a (partial) assignment $J$, and
	$J$ {\em is subsumed} by $I$, if the values of $I$ are a subset
	of those of $J$.

\paragraph{Resolution.}

	Two clauses, $c_{1}=(Ax)$ and $c_{2}=(B\bar{x})$, can be resolved
	in a third clause $c=(AB)$, so called {\em resolvent} ($c_{1}$
	and $c_{2}$ are the {\em operands}), where $A$ and $B$ are two
	subsets of literals. A {\em 3--limited resolution} is a resolution
	in which the resolvent (so called 3--limited resolvent) and the operands
	have at most 3 literals.\\
	 A CNF formula $F$ is said to be {\em closed under resolution}
	{[}respectively {\em 3--limited closed under resolution}{]} (or
	just {\em closed} {[}resp. {\em 3--limited closed}{]}) if no
	clause of $F$ is subsumed by a different clause of $F$, and the
	resolvent {[}resp. 3--limited resolvent{]} of each pair of resolvable
	clauses is subsumed by some clause of $F$.\\
	 The {\em closure} {[}resp. {\em 3--limited closure}{]}
	of a CNF formula $F$ is the CNF formula (denoted $F^{c}$ {[}resp.
	3L--$F^{c}${]}) that derived from $F$ by a series of resolutions
	{[}resp. 3--limited resolutions{]} (which add clauses) and subsumptions
	(which delete clauses), and is closed {[}resp. 3--limited closed{]}.
	Both closure and 3--limited closure are unique
	\cite{Vangelder1995}. In the same paper \cite{Vangelder1995}, the
	3--limited closure of a CNF formula has been shown to be computable
	in polynomial time.\\
	 $F^{c}$ can be separate into 2 disjoint subsets: $F^{c}=$ 3--$F^{c}\cup F^{r}$,
	where 3--$F^{c}$ is the {\em 3-closure} of $F$, i. e. the subset
	of $F^{c}$ containing only clauses of size 3 or less ( each clause
	of 3L--$F^{c}$ is then subsumed by some clause of 3--$F^{c}$), and $F^{r}$
	contains clauses of size 4 or more.

\section{Discussion before the algorithm}

	Given $\phi=\{m_{1},m_{2}\dots m_{|\phi|}\}$, a set of $|\phi|$
	models on $n$ variables $(x_{i})$$_{i\le1\le n}$ (an element of
	$\phi$ will be called either assignment or model or simply element
	according to the context). Let $F_{\phi}^{3}$ the set of all 3--clauses
	satisfied by all models in $\phi$.
		
	\subsection
	{The 3--closure of $F_{\phi|I}$ can be computed in polynomial time}
	
		Given $I$, a partial assignment of $k$ variables among $(x_{i})\ (k\le n)$.
		
		\begin{proposition} 
			The 3--closure of $F_{\phi}^{3}$ can be computed
			in polynomial time. 
		\end{proposition}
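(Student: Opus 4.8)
The plan is to show that $F_\phi^3$ itself is already essentially closed under $3$-limited resolution, so that its $3$-closure is obtained by at most one round of subsumption cleanup, or else to invoke directly the polynomial-time computability of the $3$-limited closure from \cite{Vangelder1995}. First I would observe that $F_\phi^3$ can be written down explicitly in polynomial time: there are at most $\binom{2n}{3} = O(n^3)$ distinct $3$-clauses on $n$ variables (counting literals), and for each candidate $3$-clause $c$ we can test whether every model $m_i \in \phi$ satisfies $c$ in time $O(|\phi|)$, so the whole set $F_\phi^3$ is computable in time $O(n^3 |\phi|)$.

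The key structural point is the following. Since $F_\phi^3$ consists of \emph{all} $3$-clauses satisfied by every model in $\phi$, it is closed under the operation of passing to subsuming shorter clauses in a strong sense: if $c$ is a clause of size $\le 3$ that is satisfied by all of $\phi$, then $c \in F_\phi^3$ already whenever $|c| \le 3$ (for $|c| < 3$ one may need to note that $F_\phi^3$ as defined contains only $3$-clauses, so I would either extend the definition to "clauses of size $\le 3$" or carry the shorter clauses along as they arise). Now take any $3$-limited resolvent $c = (AB)$ of two clauses $c_1 = (Ax)$, $c_2 = (B\bar x)$ of $F_\phi^3$ with $|c| \le 3$. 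Every model $m \in \phi$ satisfies both $c_1$ and $c_2$; a standard argument shows $m$ then satisfies $c$ (if $m$ falsifies all of $A$ and all of $B$, then from $c_1$ it must set $x$ true and from $c_2$ it must set $\bar x$ true, a contradiction). Hence $c$ is a clause of size $\le 3$ satisfied by all of $\phi$, so $c \in F_\phi^3$ (modulo the shorter-clause caveat above). Therefore $F_\phi^3$ is $3$-limited closed up to the removal of subsumed clauses, and its $3$-closure is just $F_\phi^3$ with subsumed clauses deleted — computable by a further $O(n^6)$ pairwise comparison.

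An alternative and cleaner route, which I would present as the actual argument, is simply to note that $F_\phi^3$ is a CNF formula of polynomial size ($O(n^3)$ clauses), and by the result of Van Gelder cited above, the $3$-limited closure $\mathrm{3L\text{-}}F_\phi^{3\,c}$ of any CNF formula is computable in polynomial time; the $3$-closure $\mathrm{3\text{-}}F_\phi^{3\,c}$ is then the subset of the $3$-limited closure consisting of clauses of size $\le 3$, obtained in polynomial time by one final filtering pass. Combining this with the polynomial-time construction of $F_\phi^3$ from $\phi$ gives the claim.

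The main obstacle I anticipate is bookkeeping about clause sizes: $F_\phi^3$ was defined to contain only clauses of size exactly $3$, but resolution naturally produces clauses of size $2$ (and unit clauses), and the closure must include these. I would address this at the outset by agreeing that throughout we treat $F_\phi^3$ (and all intermediate formulas) as sets of clauses of size \emph{at most} $3$, and re-deriving the "all shorter clauses satisfied by $\phi$ are present" invariant, which is what makes the one-step closedness argument go through. Beyond that, the proof is essentially a size count plus a citation, so no genuinely hard step remains.
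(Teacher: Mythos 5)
Your first route is essentially the paper's own proof, only spelled out in more detail: the paper argues in three lines that since $F_{\phi}^{3}$ already contains \emph{every} $3$--clause implied by it (any such clause is satisfied by all of $\phi$, hence belongs to $F_{\phi}^{3}$ by definition), the only new clauses the closure can produce have size $\le 2$, and any resolvent of size $\le 2$ has operands of size $\le 3$; so the whole computation stays inside a set of $O(n^{3})$ clauses. Your saturation argument and your handling of the shorter-clause bookkeeping capture exactly this.

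One caution about the route you say you would present as ``the actual argument.'' In this paper the $3$--closure of $F$ is \emph{not} defined as the size-$\le 3$ part of the $3$--limited closure; it is defined as the size-$\le 3$ part of the \emph{full} closure $F^{c}$. For a general CNF formula these can differ: a short clause may only be derivable through intermediate resolvents of size $\ge 4$, which $3$--limited resolution never sees. That gap is precisely why the paper needs the long recurrence of its Proposition~3.2 to show the two notions coincide for $F_{\phi|I}$. So ``compute the $3$--limited closure via Van Gelder and filter to size $\le 3$'' is not, by itself, a proof that you have computed the $3$--closure. It does become a proof once you add the saturation fact from your first route (every implied clause of size $\le 3$ is already present or reachable by $3$--limited steps from $F_{\phi}^{3}$), so your proposal as a whole is sound --- but the second route is only ``cleaner'' because it silently borrows the first route's key observation. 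You should present the saturation argument as the main content and the Van Gelder citation as the complexity bound, not the other way around.
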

		
		\begin{proof}
			Since $F_{\phi}^{3}$ contains all 3--clauses satisfied by all models
			in $\phi$, all possible 3--clauses implied by $F_{\phi}^{3}$ are
			in $F_{\phi}^{3}$. Since any resolvent of size 2 or less results
			from the resolution of clauses of size 3 or less, the 3--closure under
			resolution of $F_{\phi}^{3}$ can be computed in polynomial time.
		\end{proof}
		
		\begin{notation} 
			Call $F_{\phi}$ (or $F$ if it is not confusing)
			the 3--closure of $F_{\phi}^{3}$. 
		\end{notation}
		
		\begin{remark*} 
			(1) Each clause of $F_{\phi}^{3}$ is subsumed by
			a clause in $F_{\phi}$ and $F_{\phi}$ is equivalent to $F_{\phi}^{3}$
			. (2) As $F_{\phi}^{c}=F_{\phi}\cup F_{\phi}^{r}$ then all clauses
			of $F_{\phi}^{r}$ result from resolution of clauses of $F_{\phi}^{3}$
			or some iterated resolvents of clauses of $F_{\phi}^{3}$. 
		\end{remark*}

\begin{example}
		
			\label{ex1} Take $n=5$ and 8 models $(m_{i})_{1\leq i\leq8}$ in
			$\phi$.\\
			 $\phi=\{00111,01011,10101,11100,11111,10011,01101,00100\}$\\
			 By gathering all 3--clauses satisfied by all models of $\phi$:
			\begin{align*}
			F_{\phi}^{3}= & (x_{1}x_{2}x_{3})(\bar{x}_{1}\bar{x}_{2}x_{3})(x_{1}\bar{x}_{2}x_{5})(\bar{x}_{1}x_{2}x_{5})(x_{1}x_{3}x_{4})(\bar{x}_{1}x_{3}x_{4})(x_{1}x_{3}x_{5})(\bar{x}_{1}x_{3}x_{5})(x_{1}\bar{x}_{4}x_{5})\\
			 & (\bar{x}_{1}\bar{x}_{4}x_{5})(x_{2}x_{3}x_{4})(\bar{x}_{2}x_{3}x_{4})(x_{2}x_{3}x_{5})(\bar{x}_{2}x_{3}x_{5})(x_{2}\bar{x}_{4}x_{5})(\bar{x}_{2}\bar{x}_{4}x_{5})(x_{3}x_{4}x_{5})(x_{3}x_{4}\bar{x}_{5})\\
			 & (x_{3}\bar{x}_{4}x_{5})(\bar{x}_{3}\bar{x}_{4}x_{5}) \end{align*}
			
			Its 3-closure is:\

			 $F_{\phi}=(x_{1}x_{2}x_{3})(\bar{x}_{1}\bar{x}_{2}x_{3})(x_{1}\bar{x}_{2}x_{5})(\bar{x}_{1}x_{2}x_{5})(x_{3}x_{4})(x_{3}x_{5})(\bar{x}_{4}x_{5})$
		
		\end{example}
		
		\begin{proposition} 
			Given $I$, a partial assignment of $k$ variables
			among $(x_{i})\ (k\le n)$, the 3--closure of $F_{\phi|I}$ is computable
			in polynomial time. 
		\end{proposition}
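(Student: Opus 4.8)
The strategy is to show that the $3$-closure of $F_{\phi|I}$ can be obtained by first inducing the already-computed closed formula $F_\phi$ with $I$, and then re-closing the result under $3$-limited resolution — and that this re-closure is cheap because the "size-$3$ part" is already essentially saturated. First I would note that, by the Notation, $F_\phi$ is the $3$-closure of $F_\phi^3$, computable in polynomial time by Proposition~1; it is equivalent to $F_\phi^3$, hence to the candidate formula. Since $F_{\phi|I}$ and $F_{\phi|I}$... more precisely $F_\phi^3{}_{|I}$ and $F_{\phi|I}$ are logically equivalent (inducing equivalent formulas by the same partial assignment yields equivalent formulas), their $3$-closures coincide by uniqueness of the $3$-limited closure \cite{Vangelder1995}. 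So it suffices to compute the $3$-closure of $F_{\phi|I}$ where $F_\phi$ is the closed formula from the Notation.

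Next I would analyze what happens to a closed formula under an induced assignment. Writing $F = F_\phi$, each clause $c \in F$ has size $\le 3$, and $c_{|I}$ has size $\le 3$ as well; so $F_{|I}$ consists only of clauses of size $\le 3$ (plus possibly the empty clause). The key point is that for clauses of size $\le 3$, the $3$-limited closure is just the ordinary closure, and it can be computed in polynomial time: there are at most $O(n^3)$ distinct clauses of size $\le 3$ on $n$ variables, each resolution step of operands of size $\le 3$ produces a resolvent of size $\le 4$ but since we only keep resolvents of size $\le 3$ (the $3$-limited restriction) we stay within this polynomial-size universe, so a straightforward saturation loop (repeatedly resolve, add size-$\le 3$ resolvents, discard subsumed clauses) terminates after polynomially many steps. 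This is exactly the argument of Proposition~1 transported to $F_{|I}$, the only difference being that $F_{|I}$ need not contain \emph{all} size-$\le 3$ consequences a priori — but the saturation procedure adds them, and there are only polynomially many.

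Concretely, the steps are: (1) compute $F = F_\phi$ in polynomial time (Proposition~1); (2) form $F_{|I}$ in time linear in $|F|\cdot k$ by the definition of the induced formula; (3) observe $F_{|I}$ has only clauses of size $\le 3$; (4) run the standard $3$-limited closure algorithm of \cite{Vangelder1995}, which on an input of clauses of size $\le 3$ runs in polynomial time and, since all resolvents and operands have size $\le 3$, its output is precisely the $3$-closure of $F_{|I}$; (5) invoke uniqueness of the $3$-limited closure together with the equivalence $F_\phi \equiv F_\phi^3$ to conclude that this output equals the $3$-closure of $F_{\phi|I}$, as desired. The main obstacle is step (5): one must be careful that "the $3$-closure of $F_{\phi|I}$" is well-defined and insensitive to which equivalent representative of the candidate formula we induce — this rests on the uniqueness result cited from \cite{Vangelder1995} and on the elementary fact that if $G \equiv H$ then $G_{|I} \equiv H_{|I}$, so their (unique) $3$-limited closures, restricted to size-$\le 3$ clauses, agree. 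The polynomial-time bound itself is then routine, bounded by the $O(n^3)$ size of the universe of short clauses times the cost of a subsumption sweep.
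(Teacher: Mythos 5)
There is a genuine gap, and it sits exactly where the paper's own proof does all its work. You conflate the \emph{3-limited closure} of $F_{\phi|I}$ (saturation under resolutions whose operands \emph{and} resolvents all have at most 3 literals, which is indeed computable in polynomial time) with its \emph{3-closure}, which the paper defines as the set of clauses of size $\le 3$ occurring in the \emph{full, unrestricted} resolution closure $F_{\phi|I}^{c}$. Your ``key point'' --- that for a formula all of whose clauses have size $\le 3$ the 3-limited closure already captures every size-$\le 3$ clause of the full closure --- is false for general 3-CNF formulas: a size-$\le 3$ consequence may only be derivable through intermediate resolvents of size $\ge 4$, which 3-limited resolution forbids. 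Concretely, there exist unsatisfiable 3-CNF formulas every resolution refutation of which must use wide clauses; for such a formula the empty clause lies in the 3-closure but not in the 3-limited closure, so your step (4) (``its output is precisely the 3-closure of $F_{|I}$'') fails. The parenthetical ``the saturation procedure adds them'' begs precisely this question: saturation under width-3 resolution adds only what is width-3 derivable, not every size-$\le 3$ semantic consequence.

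The reason Proposition 1 goes through is special to $F_{\phi}^{3}$: it contains \emph{all} 3-clauses satisfied by every model of $\phi$, hence already contains every implied 3-clause, so only resolvents of size $\le 2$ remain to be generated and those arise from operands of size $\le 3$. The induced formula $F_{\phi|I}$ does not inherit this saturation property for free, and the entire content of the paper's proof is a recurrence (on the size of a set $R_{c}$ of clauses drawn from the 3-limited closure of $F_{\phi}\cup F_{\phi|I}$ that implies a given clause $c$) aimed at showing that the particular structure of $F_{\phi|I}$ --- it is the restriction of a formula that \emph{is} 3-saturated --- forces every clause of its 3-closure to be subsumed by a clause of its 3-limited closure. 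Whatever one thinks of the details of that recurrence, some argument exploiting this special structure is indispensable; your proposal replaces it with a generic claim about short-clause formulas that is simply not true.
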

		
		\begin{proof} By recurrence.
		
			Let $R_{|I}$ the 3--limited closure of $F_{\phi}\cup F_{\phi|I}$,
			i.e. the set of clauses easily reachable from $F_{\phi}$ or $F_{\phi|I}$.
			Given $c$ a clause implied by $F_{\phi}$, it exists at least one
			subset of $R_{|I}$ whose clauses imply $c$. Name $R_{c}$ such a
			subset. \\

			Let $P(k)$ the following property :\\
			 $P(k):$ For all $c$ implied by $F_{\phi}$ such that $|c_{|I}|\le3,$\\
			 $[\exists R_{c}\subseteq R_{|I}$ such that $|R_{c}|\le k\Rightarrow c_{|I}$
			is subsumed by some clause $\in$ 3L--$F_{\phi|I}^{c}]$\\
			 (3L--$F_{\phi|I}^{c}$ is the 3--limited closure of $F_{\phi|I})$
			\\

			Here does the recurrence begin. \\
			 Given $c$ implied by $F_{\phi}$ such that $|c_{|I}|\le3$,
			i.e. $c_{|I}\in$ the 3--closure of $F_{\phi|I}$.

			\begin{enumerate}
			
				\item 
				$k=1$. If $\exists R_{c}\subseteq R_{|I}/|R_{c}|=1$ then $R_{c}=\{d\}$
				($d\in F_{\phi}\cup F_{\phi|I}$ subsumes $c$) and $c_{|I}$ is subsumed
				by $d_{|I}\in F_{\phi|I}$ (note that any clause of $F_{\phi|I}$
				is subsumed by some clause of $G_{\phi|I}$). Thus $P(1)$.
				
				\item 
				Suppose $P(k)$ for $k\ge1$. If $\exists R_{c}\subseteq R_{|I}$
				such that $|R_{c}|\le k+1$ (and no other $R_{c}$ of size 1 such
				that $c\notin F_{\phi}$ and $|c|>3$) then suppose $c=(\alpha \beta \gamma L_{I})$
				where $\alpha ,\beta ,\gamma$ are literals not set by $I$ and $L_{I}$ is a subset
				of literals all evaluate to 0 under $I (L_I \neq \varnothing)$, i.e $c_{|I}=(\alpha \beta \gamma)$, with
				$\alpha ,\beta ,\gamma$ not necessarily different. 
				
				\item 
				Remove a clause $d_{i}$ from $R_{c}$ such that $|d_{i|I}|<|d_{i}|\le3$,
				in other words, such that $d_{i}$ contains some literal from $L_{I}$
				(there is at least one such clause in $R_{c}$ since $L_I \neq \varnothing$) and $|d_{i|I}|\leq 2$.
				
				\item 
				The size of the remaining set $R_{c}\setminus d_{i}$ is $\le k$.
				If a certain clause $c'=(\alpha \beta \gamma L'_{I})$ is implied by $R_{c}\setminus d_{i}$
				(where $L'_{I}$ is a subset of literals all evaluate to 0 under $I$)
				then $|c'_{|I}|=3$ and $\exists R_{c'}=R_{c}\setminus d_{i}\subseteq R_{|I}$
				such that $|R_{c'}|\le k$. By $P(k)$, $c'_{|I}=(\alpha \beta \gamma)$ is then subsumed
				by some clause $\in$ 3L--$F_{\phi|I}^{c}$, inducing $P(k+1)$ for
				$c$.

				\item
				If $d_{i|I}$ contains $\bar \alpha$ or $\bar \beta$ or $\bar \gamma$ then $d_{i|I}$ is useless to imply [some clause subsuming] $c$. Then $R_{c}\setminus d_{i}$ implies $c$, inducing $P(k+1)$ as shown previously.
				
				\item
				If $d_{i|I}\in F_{\phi|I}$ subsumes $c_{|I}$ then $P(k+1)$ is satisfied
				for $c$.
				
				\item 
				If $d_{i|I}$ does not subsume $c_{|I}$ and does not contain $\bar \alpha$ or $\bar \beta$ or $\bar \gamma$ then either (a) $d_{i|I}=(x)$
				or (b) $d_{i|I}=(ax)$ or (c) $d_{i|I}=(xy)$, where $x$ and
				$y$ $\notin\{\alpha \beta \gamma \}$ and are not set by $I$, and $a \in\{\alpha \beta \gamma \}$.
				
				\begin{enumerate}
				
					\item 
					If $d_{i|I}=(x)$ then $R_{c}\setminus d_{i}$ implies $(\bar{x}\alpha \beta \gamma L_{I})$ (recall that implying a certain clause $C$ means implying a clause which subsumes $C$). Since any resolution with $d_{i|I}=(x)$ as operand removes $\bar{x}$ from the other operand then no clause of $R_{c}\setminus d_{i}$ contains $\bar x$ (for $R_{c}\setminus d_{i} \subseteq R_{|I}$ which is the 3--limited closure of $F_{\phi}\cup F_{\phi|I}$). Then $R_{c}\setminus d_{i}$ implies $(\alpha \beta \gamma L_{I})$, inducing $P(k+1)$ as shown in Point (4). 
					
					\item 
					If $d_{i|I}=(ax)$ then $R_{c}\setminus d_{i}$ implies $(\bar{x}\alpha \beta \gamma L_{I})$. Replace $\bar{x}$ by $a$ in each possible
					clause of $R_{c}\setminus d_{i}$ (if the new clause is subsumed by some clause in $R_{|I}$, keep the subsuming clause instead. Anyway, the replacing clause is in $R_{|I}$). Name $R_{c,d_{i}}$ the resulting
					set ($R_{c,d_{i}}\subseteq R_{|I}$). Then $R_{c,d_{i}}$ implies $(\alpha \beta \gamma L_{I})$, inducing $P(k+1)$
					as above.
					
					\item 
					If $d_{i|I}=(xy)$ then $R_{c}\setminus d_{i}$ implies $(\bar{x}\alpha \beta \gamma L_{I})$
					and $(\bar{y}\alpha \beta \gamma L_{I})$. Replace $\bar{x}$
					by $y$ in each possible clause of $R_{c}\setminus d_{i}$ (as above, if the new clause is subsumed by some clause in $R_{|I}$, keep the subsuming clause instead). Name $R_{c,d_{i}}$
					the resulting set ($R_{c,d_{i}}\subseteq R_{|I}$). Then $R_{c,d_{i}}$ implies $({y}\alpha \beta \gamma L_{I})$. Since
					it implies also $(\bar{y}\alpha \beta \gamma L_{I})$ then it implies the resolvent
					$(\alpha \beta \gamma L_{I})$, inducing $P(k+1)$. \\
				
				\end{enumerate}
			
			\end{enumerate}
			
			By this recurrence, any clause $\in$ the 3--closure of $F_{\phi|I}$
			is subsumed by some clause $\in$ 3L--$F_{\phi|I}^{c}$ (the other
			way holds as well). Then the 3--limited closure of $F_{\phi|I}$ (computable
			in polynomial time) corresponds to the 3--closure of $F_{\phi|I}$.
		
		\end{proof}

	\subsection
	{$F_{\phi|I}$ is unsatisfiable iff its 3--closure contains the empty
	clause}
	
		Given $I$, a partial assignment of $k$ variables among $(x_{i})\ (k\le n)$.
		
		\begin{proposition} 
			Given $F$, a 3--CNF formula on $n$ variables
			$(x_{i})$$_{i\le1\le n}$. $F$ is closed under resolution implies
			$F_{|I}$ is closed under resolution. 
		\end{proposition}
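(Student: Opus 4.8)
The plan is to argue by contradiction: suppose $F$ is closed under resolution but $F_{|I}$ is not. By the definition of "closed", either (a) some clause of $F_{|I}$ is subsumed by a different clause of $F_{|I}$, or (b) there is a resolvable pair $c_1', c_2'$ in $F_{|I}$ whose resolvent is subsumed by no clause of $F_{|I}$. In each case I will lift the witnessing clauses back to their originals in $F$ (each clause $c' \in F_{|I}$ is of the form $c_{|I}$ for some $c \in F$ that was not deleted, i.e. $c$ contains no literal satisfied by $I$, and $c'$ is obtained by stripping from $c$ exactly those literals falsified by $I$), and derive a contradiction with the closure of $F$.

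For case (a): if $d_{|I}$ subsumes $c_{|I}$ with $d_{|I} \neq c_{|I}$, I want to show $d$ subsumes $c$ in $F$. The literals of $d$ split into those falsified by $I$ (call the set $D_0$) and the rest ($d_{|I} = d \setminus D_0$); similarly $c_{|I} = c \setminus C_0$. Since $d$ survived in $F_{|I}$, $d$ contains no literal satisfied by $I$; same for $c$. From $d_{|I} \subseteq c_{|I}$ I get $d \setminus D_0 \subseteq c \setminus C_0 \subseteq c$, and every literal of $D_0$ is falsified by $I$ hence (its complement being satisfied) cannot appear in $c$ — wait, that is not quite what I need; rather $D_0 \subseteq C_0$ because a literal falsified by $I$ that appears in $d$ and also in $c$ lies in $C_0$, and a literal in $D_0$ not in $c$ would have to still be accounted for. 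The clean statement is: $d \subseteq c$, because any literal $\ell \in d$ is either in $d_{|I} \subseteq c$, or in $D_0$; and if $\ell \in D_0$ then $\ell$ is falsified by $I$, so if $\ell \notin c$ then $c_{|I}$ lacks $\ell$ while $d_{|I}$ also lacks $\ell$ — that's consistent, so I must instead observe that $D_0 \subseteq C_0$ is the wrong containment. The right move: I do not need $d \subseteq c$ literally; I need a contradiction with $F$ being closed. So instead take $d$ and resolve/weaken: actually the simplest route is that $d_{|I} \neq c_{|I}$ and $d_{|I} \subseteq c_{|I}$ force $d \neq c$ (distinct clauses of $F$) with $d$ subsuming $c_{|I}$; then $d$, after removing the literals of $d$ falsified by $I$... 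I will formalize this by showing directly that the clause $d$ subsumes $c$ as clauses of $F$ (every literal of $d$ that is falsified by $I$ must already lie in $c$, since otherwise $c_{|I}$ would not be a superset of $d_{|I}$ once we note $I$ sets those variables consistently), contradicting that $F$ has no subsumed clauses.

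For case (b): let $c_1' = (A'x)$ and $c_2' = (B'\bar x)$ be clauses of $F_{|I}$ with resolvent $(A'B')$. They come from $c_1 = (A x S_1), c_2 = (B \bar x S_2) \in F$ where $S_1, S_2$ are the literals falsified by $I$ (note $x, \bar x$ are not set by $I$, else the clauses would be deleted, so the variable of $x$ survives and genuinely occurs in both originals). Since $F$ is closed, the resolvent $(A B S_1 S_2) \in F$ is subsumed by some clause $e \in F$; moreover $e$ contains no literal satisfied by $I$ — because any literal of $e$ also appears in... no, that need not hold, so I instead use: $e$ subsumes $(AB S_1 S_2)$, and $S_1 \cup S_2$ is consistent with $I$ (all falsified), and $e \subseteq (ABS_1S_2)$ hence $e$ has no literal satisfied by $I$, so $e$ survives in $F_{|I}$ as $e_{|I} \subseteq (AB) = $ the resolvent — wait, I need $e_{|I} \subseteq A'B'$. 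Since $e \subseteq A \cup B \cup S_1 \cup S_2$ and the $S_i$-literals are exactly the ones removed when passing to $F_{|I}$, $e_{|I} = e \setminus (S_1 \cup S_2)\text{-part} \subseteq A \cup B$; and $A' \subseteq A$, $B'\subseteq B$ with $A = A'$, $B = B'$ on the surviving part — actually $A' $ is precisely $A$ minus nothing since $A$ already excludes $I$-falsified literals (those went into $S_1$), so $A' = A$ and $B' = B$. Hence $e_{|I} \subseteq A'B'$, i.e. the resolvent in $F_{|I}$ is subsumed — contradiction.

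I expect the main obstacle to be the careful bookkeeping in case (b): making sure that when $c_1'$ and $c_2'$ resolve on $x$ in $F_{|I}$, the originals $c_1, c_2$ genuinely resolve on the same variable $x$ in $F$ (this uses that $I$ does not assign $x$, which follows from both clauses surviving), and that the subsuming clause $e$ of the big resolvent in $F$, after applying $I$, still survives and still subsumes the small resolvent $(A'B')$ — which needs the observation that $e$'s literals are confined to $A\cup B\cup S_1\cup S_2$, a set on which $I$ is consistent, so no literal of $e$ is satisfied by $I$ and $e$ is not deleted. The subsumption sub-case (a) is comparatively routine once one notes that passing to $F_{|I}$ only deletes clauses and shrinks clauses in a uniform way (removing exactly the $I$-falsified literals), so it cannot create a new subsumption that was not already present in $F$.
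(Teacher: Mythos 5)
Your case (b) is correct and is essentially the paper's own proof (credited there to Je\v{r}\'abek): lift the two resolvable clauses of $F_{|I}$ to clauses $d_{1},d_{2}$ of $F$, take the clause $e\in F$ subsuming their resolvent, and observe that every literal of $e$ lies in $(d_{1}\cup d_{2})\smallsetminus\{x,\bar{x}\}$, hence is not satisfied by $I$; so $e$ survives restriction and $e_{|I}$ subsumes the resolvent formed in $F_{|I}$. That half needs no repair.

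Case (a) is where the proposal genuinely breaks, and your own hesitations there are warranted. The step you finally commit to --- ``every literal of $d$ that is falsified by $I$ must already lie in $c$, since otherwise $c_{|I}$ would not be a superset of $d_{|I}$'' --- is false: literals of $d$ falsified by $I$ are \emph{deleted} from $d_{|I}$, so they impose no constraint on $c$ at all. Concretely, take $F=(xy)(\bar{x}z)(yz)$, which is closed in the paper's sense: the only resolvent, $(yz)$, is already present, and no clause subsumes a different one (pad with an independent 3--clause such as $(uvw)$ if you want $F$ to contain a 3--clause). Under $I=\{x\mapsto 0\}$ the clause $(\bar{x}z)$ is deleted and $(xy)$ shrinks to $(y)$, so $F_{|I}$ contains $(y)$ and $(yz)$, and $(y)$ strictly subsumes $(yz)$. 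Thus the subsumption-freeness half of the paper's definition of ``closed'' is simply not preserved by restriction, and no bookkeeping will rescue your case (a); your closing remark that restriction ``cannot create a new subsumption'' is exactly what fails, because different clauses lose different literals. The way out is to note that this half is irrelevant to the statement's use: the only property needed downstream (Quine's theorem in the next proposition) is that every resolvent of clauses of $F_{|I}$ is subsumed by some clause of $F_{|I}$ --- precisely your case (b), and the only case the paper's own proof addresses. Either prove the proposition for that weaker, standard reading of ``closed under resolution,'' or follow the restriction by deleting subsumed clauses, which changes neither satisfiability nor the presence of the empty clause.
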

		
		\begin{proof}
			If $c_{1}\ni x_{i}$ and $c_{2}\ni\bar{x}_{i}$ are in $F_{|I}$ (in
			particular, $x_{i}$ is unset by $I$), pick clauses $d_{1},d_{2}$
			in $F$ which restrict to $c_{1}$ and $c_{2}$, respectively. Then
			$x_{i}\in d_{1}$ and $\bar{x_{i}}\in d_{2}$, hence their resolvent
			$(d_{1}\smallsetminus\{x_{i}\})\cup(d_{2}\smallsetminus\{\bar{x_{i}}\})$
			is subsumed by some $d\in F$. If $d$ contains a literal made $true$
			under $I$, then so does $d_{1}$ or $d_{2}$, contradicting their
			choice. Thus, $d_{|I}$ is in $F_{|I}$, and it subsumes $(c_{1}\smallsetminus\{x_{i}\})\cup(c_{2}\smallsetminus\{\bar{x_{i}}\})$.\\
			 {\em Thanks to Emil Je\u{r}\'abek} (\url{http://cstheory.stackexchange.com/a/16835/6346}).
		\end{proof}
		
		\begin{proposition} 
			$F_{\phi|I}$ is unsatisfiable iff its 3--closure
			contains the empty clause. 
		\end{proposition}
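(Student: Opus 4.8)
The plan is to treat the two implications separately. The backward implication is immediate from soundness, while the forward one will reduce to the completeness of propositional resolution together with the trivial remark that $(\varnothing)$ has size $0\le 3$. For the backward direction, suppose the 3--closure of $F_{\phi|I}$ contains $(\varnothing)$. Here I would simply observe that the 3--closure is contained in $F_{\phi|I}^{c}$, and that $F_{\phi|I}^{c}$ is derived from $F_{\phi|I}$ by resolutions (which add only logical consequences) and subsumptions (which delete only entailed clauses); therefore $F_{\phi|I}\models(\varnothing)$, so $F_{\phi|I}$ is unsatisfiable. No property of $F_{\phi}$ is used for this direction.

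For the forward direction, assume $F_{\phi|I}$ is unsatisfiable. The key steps, in order, would be: (i) recall that $F_{\phi|I}^{c}$ is logically equivalent to $F_{\phi|I}$, hence unsatisfiable, and is closed under resolution; (ii) invoke completeness of resolution to obtain a derivation of $(\varnothing)$ from $F_{\phi|I}$; (iii) since $F_{\phi|I}^{c}$ is closed, the derived clause $(\varnothing)$ is subsumed by some clause of $F_{\phi|I}^{c}$, and as $(\varnothing)$ is subsumed only by itself, $(\varnothing)\in F_{\phi|I}^{c}$; (iv) since $|(\varnothing)|=0\le 3$, the empty clause lies in the subset of $F_{\phi|I}^{c}$ consisting of clauses of size at most $3$, which by definition is the 3--closure of $F_{\phi|I}$. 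As an alternative to steps (i)--(iii), one can route the argument through the earlier Proposition ``$F$ closed under resolution $\Rightarrow F_{|I}$ closed under resolution'' applied to $F=F_{\phi}^{c}$: then $(F_{\phi}^{c})_{|I}$ is closed, subsumption--free and equivalent to $F_{\phi|I}$, hence equal to $F_{\phi|I}^{c}$, and the same conclusion follows.

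I do not expect a genuine obstacle inside this proof: as an isolated equivalence it is essentially a restatement of the completeness of resolution, and in fact holds for every CNF formula, not just $F_{\phi|I}$. Its interest comes only from pairing it with the earlier Proposition stating that the 3--closure of $F_{\phi|I}$ coincides with its \emph{3--limited} closure, which is polynomial--time computable; together they turn the criterion ``$(\varnothing)\in$ 3--closure of $F_{\phi|I}$'' into an effective unsatisfiability test for $F_{\phi|I}$. The one point that deserves care is not to conflate the genuine 3--closure 3--$F_{\phi|I}^{c}$ occurring in the statement with the 3--limited closure one actually computes --- but that identification has already been discharged by the preceding results.
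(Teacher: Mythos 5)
Your proof is correct, and your primary route is in fact a little more direct than the paper's. The paper realizes the closed, unsatisfiable formula as the \emph{restriction of the closure}, $(F_{\phi}^{c})_{|I}$, which is why it must invoke the preceding proposition (closedness is preserved under restriction), then Quine's theorem, and finally the auxiliary facts that equivalent formulas share the same 3--closure and that $(\varnothing)$ has size $0$, in order to transfer the empty clause back to the 3--closure of $F_{\phi|I}$ itself. You instead work with the \emph{closure of the restriction}, $(F_{\phi|I})^{c}$, which is closed by definition; completeness of resolution plus closedness puts $(\varnothing)$ in it, and the size--$0$ observation finishes, with no appeal to the preceding proposition and no need to compare 3--closures of equivalent formulas. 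As you observe, this is just Quine's theorem and holds for an arbitrary CNF formula --- no property of $F_{\phi}$ is used. Your ``alternative'' route is essentially the paper's own argument; the one small caveat there is that restriction can create new subsumptions, so $(F_{\phi}^{c})_{|I}$ need not be subsumption-free and hence need not literally coincide with the canonical closure $F_{\phi|I}^{c}$ --- but since Quine's theorem only needs closure under resolvents, both routes go through. Your closing remark also correctly locates the real content of this criterion in the separate identification of the 3--closure with the polynomial-time computable 3--limited closure.
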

		
		\begin{proof}
			As $F_{\phi}^{c}=F_{\phi}\cup F_{\phi}^{r}$ then $F_{\phi|I}^{c}=F_{\phi|I}\cup F_{\phi|I}^{r}$.
			Suppose the 3--closure of $F_{\phi|I}$ is unsatisfiable (the other
			implication is obvious). Then $F_{\phi|I}^{c}$ is unsatisfiable and
			it contains the empty clause (from the previous proposition and the
			Quine's theorem \cite{quine1952}: A formula closed under resolution
			is unsatisfiable iff it contains the empty clause). 
			\begin{enumerate}
				\item 
				As $F_{\phi}^{c}$ is equivalent to $F_{\phi}$ then $F_{\phi|I}^{c}$
				is equivalent to $F_{\phi|I}$. 
				
				\item 
				Two equivalent formulas have the same 3-closure. 
	
				\item 
				If the empty clause is in a formula then it is in its 3-closure (since
				$|(\varnothing)|=0$). 
			\end{enumerate}
			Hence $(\varnothing)$ is in the 3--closure of $F_{\phi|I}$. 
		\end{proof}

	\subsection
	{$\bar{\phi}$, a set of partial assignments subsuming all assigments
	$\notin\phi$, can be computed in polynomial time}
	
		Consider some total order among the $n$ variables, say the lexicographic
		one.
		
		\begin{definition} 
			Some additionnal usefull definitions: 
			\begin{enumerate}
				\item 
				Let $M_{k}$ be the set of all 2$^{k}$ partial assignments $(I_{k})$
				on the first $k$ values of the variables ($1\le k\le n$). 

				\item 
				Let $\phi_{k}=\{I_{k}\in M_{k}$/$I_{k}\in\phi\}$ 

				\item 
				Let $\bar{\phi}_{k}=\{I_{k}\in M_{k}$/$I_{k-1}\in\phi_{k-1}$ and
				$I_{k}\notin\phi_{k}\}$ ($I_{0}=\varnothing$ and $\phi_{0}$ is
				the empty set) 

				\item 
				Let $\bar{\phi}=\bigcup_{k}\bar{\phi}_{k}$ 

				\item 
				Let $m_{i,j}$ the restriction of $m_{i}\in\phi$ to its first $j$
				values and $\bar{m}_{i,j}$ the restriction of $m_{i}\in\phi$ to
				its first $j-1$ values ($j\ge1$) concatenated with the opposite
				of its $j^{th}$ value (as last value). 
			\end{enumerate}
		\end{definition}
		
		\begin{proposition} 
			About $\bar{\phi}_{k}$ 
			\begin{enumerate}
				\item 
				The extension to the rest of the $n$ variables of any partial assignment
				of $\bar{\phi}_{k}$ is not in $\phi$. 

				\item 
				An assignment $I_{n}$ of the $n$ variables does not belong to $\phi$
				iff $\exists k\le n$, $I_{k}\in\bar{\phi}_{k}$ where $I_{k}$ is
				the partial assigment issued from $I_{n}$ restricted to the first
				$k$ values.
 
				\item 
				The computation of $\bar{\phi}_{k}$ can be done in polynomial time. 
			\end{enumerate}
		\end{proposition}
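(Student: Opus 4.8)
The plan is to treat the three items in order. Items (1) and (2) are combinatorial facts about prefixes of the models, while item (3) is the one carrying content: one must notice that, although $\bar\phi_k\subseteq M_k$ with $|M_k|=2^k$, the set $\bar\phi_k$ in fact has at most $|\phi|$ elements. For item (1), recall that $I_k\in\bar\phi_k$ forces $I_k\notin\phi_k$, i.e.\ $I_k$ is not the length-$k$ prefix of any model of $\phi$; I would then argue by contradiction, since if a total extension $I_n$ of $I_k$ lay in $\phi$, say $I_n=m_i$, truncating $I_n$ to its first $k$ values would give $I_k=m_{i,k}\in\phi_k$, a contradiction. Hence no extension of $I_k$ is in $\phi$.

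For item (2), the direction ``$\Leftarrow$'' is immediate from item (1): if the length-$k$ prefix $I_k$ of $I_n$ lies in $\bar\phi_k$, then $I_n$ extends $I_k$, so $I_n\notin\phi$. For ``$\Rightarrow$'' I would use monotonicity along prefixes: if $I_j\in\phi_j$ (a prefix of some model $m$), then the shorter prefix $I_{j-1}$ is also a prefix of $m$, so $I_{j-1}\in\phi_{j-1}$. Given $I_n\notin\phi$, look at the chain $I_0=\varnothing,I_1,\dots,I_n$ of its prefixes: reading the convention as $I_0\in\phi_0$, it begins inside the $\phi_j$'s, and it ends outside since $I_n\notin\phi$ means $I_n\notin\phi_n$. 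Let $k\ge1$ be the least index with $I_k\notin\phi_k$; minimality gives $I_{k-1}\in\phi_{k-1}$, and $I_{k-1}\in\phi_{k-1}$ together with $I_k\notin\phi_k$ is precisely the definition of $I_k\in\bar\phi_k$.

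For item (3), the key point is that the clause $I_{k-1}\in\phi_{k-1}$ in the definition already confines $\bar\phi_k$: the set $\phi_{k-1}=\{m_{i,k-1}:m_i\in\phi\}$ has at most $|\phi|$ elements, and if $I_k\in\bar\phi_k$ then $I_{k-1}=m_{i,k-1}$ for some $i$ while the $k$-th value of $I_k$ cannot equal the $k$-th value of $m_i$ (otherwise $I_k=m_{i,k}\in\phi_k$), hence it is its complement; that is, $\bar\phi_k=\{\bar m_{i,k}:m_i\in\phi\}\setminus\phi_k$, whence $|\bar\phi_k|\le|\phi|$. Concretely I would (a) build the lists $\phi_k$ and $\{\bar m_{i,k}:m_i\in\phi\}$ by truncating and flipping the $k$-th value of the given models, sorting or hashing to remove duplicates (time $O(|\phi|\,n)$), then (b) delete from the second list every element occurring in $\phi_k$; each such test is a lookup, so the total cost is polynomial in the input length $\Theta(|\phi|\,n)$. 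Summing over $k$ also gives $|\bar\phi|=|\bigcup_k\bar\phi_k|\le n|\phi|$, the bound the later algorithm relies on.

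The step needing the most care is item (3), though not because it is deep: the one thing to get right is the size bound $|\bar\phi_k|\le|\phi|$, since otherwise one would be enumerating a subset of the exponential-size $M_k$. In item (2) the only delicate point is the base of the prefix chain — it must start inside the $\phi_j$'s — which is exactly what the convention on $I_0$ and $\phi_0$ is there to secure.
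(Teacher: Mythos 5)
Your proof is correct and follows essentially the same route as the paper: item (1) by noting a prefix in $\bar\phi_k$ cannot be a prefix of any model, item (2) via the first index at which the prefix chain leaves the $\phi_j$'s (which the paper dismisses as ``obvious''), and item (3) by the identity $\bar\phi_k=\{\bar m_{i,k}:m_i\in\phi\}\setminus\phi_k$, which is exactly the paper's algorithm of computing each $\bar m_{i,k}$ and keeping it when it is not in $\phi_k$. Your explicit size bound $|\bar\phi_k|\le|\phi|$ and your reading of the $\phi_0$ convention so that the chain starts inside the $\phi_j$'s are both the intended (and needed) interpretations.
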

		
		\begin{proof}
			$(1)$ Since any element of $\bar{\phi}_{k}$ is not in $\phi$, neither
			is any extension of it.\\
			 $(2)$ If $I_{n}\notin\phi$ then obviously $\exists k\le n$,
			$I_{k}\in\bar{\phi}_{k}$. If $\exists k\le n$, $I_{k}\in\bar{\phi}_{k}$
			where $I_{k}$ is the partial assigment issued from $I_{n}$ restricted
			to the first $k$ values then by $(1)$ any extension of $I_{k}\notin\phi$
			and $I_{n}\notin\phi$.\\
			 $(3)$ $|\phi_{k}|,|\bar{\phi}_{k}|\le|\phi|$ (and $|\bar{\phi}|\le n|\phi|$).
			The computation of $\phi_{k}$ can obviously be done in polynomial
			time. So can be the computation of $\bar{\phi}_{k}$: for each model
			$m_{i}\in\phi$, compute $\bar{m}_{i,k}$, put it in $\bar{\phi}_{k}$
			if it does not belong to $\phi_{k}$.
		\end{proof}
		
		\begin{proposition}
			About $\bar{\phi}$ 
			\begin{enumerate}
				\item 
				The extension to the rest of the $n$ variables of any partial assignment
				of $\bar{\phi}$ is not in $\phi$. 

				\item 
				$\bar{\phi}$ is a set of partial assignments subsuming all assigments
				of the $n$ variables which are not in $\phi$ ($|\bar{\phi}|\le n|\phi|$). 

				\item 
				$\bar{\phi}$ can be computed in polynomial time. 
			\end{enumerate}
		\end{proposition}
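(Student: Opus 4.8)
The plan is to obtain all three assertions as immediate corollaries of the preceding proposition about $\bar{\phi}_{k}$, exploiting the definition $\bar{\phi}=\bigcup_{k}\bar{\phi}_{k}$ (the union ranging over $1\le k\le n$). For (1), let $I\in\bar{\phi}$; by definition of the union there is some $k$ with $1\le k\le n$ such that $I\in\bar{\phi}_{k}$, and part (1) of the proposition about $\bar{\phi}_{k}$ says precisely that every extension of $I$ to the remaining $n-k$ variables fails to be in $\phi$. Since $I$ was arbitrary, this is exactly (1).

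For (2), take any total assignment $I_{n}$ of the $n$ variables with $I_{n}\notin\phi$. By part (2) of the proposition about $\bar{\phi}_{k}$ there exists $k\le n$ such that the restriction $I_{k}$ of $I_{n}$ to its first $k$ values lies in $\bar{\phi}_{k}$, hence $I_{k}\in\bar{\phi}$. Because $I_{k}$ is a restriction of $I_{n}$, the set of values of $I_{k}$ is a subset of the set of values of $I_{n}$, so by the definition of subsumption for partial assignments $I_{k}$ subsumes $I_{n}$. Thus every total assignment outside $\phi$ is subsumed by a member of $\bar{\phi}$; conversely, by (1) no extension of a member of $\bar{\phi}$ ever re-enters $\phi$. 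For the size bound, $\bar{\phi}$ is a union of the $n$ sets $\bar{\phi}_{1},\dots,\bar{\phi}_{n}$, and the proof of part (3) of the previous proposition records $|\bar{\phi}_{k}|\le|\phi|$ for each $k$; hence $|\bar{\phi}|\le n|\phi|$.

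For (3), I would run, for each $k$ from $1$ to $n$, the polynomial-time procedure of part (3) of the proposition about $\bar{\phi}_{k}$ (for each model $m_{i}\in\phi$ form $\bar{m}_{i,k}$ and keep it iff it is not in $\phi_{k}$), then take the union of the results and remove duplicates; this is $n$ polynomial-time computations followed by a polynomial-time merge, hence polynomial overall.

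I do not expect a genuine obstacle here: the only point requiring a little care is the direction of the subsumption relation — one must check that the \emph{shorter} assignment $I_{k}$ subsumes the \emph{longer} $I_{n}$, not the reverse — together with the observation that $\bar{\phi}$ legitimately mixes partial assignments of different lengths, which is admissible because the notion of subsumption defined in the Preliminaries applies to partial assignments of possibly different size.
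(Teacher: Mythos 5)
Your proof is correct and follows exactly the route the paper intends: the paper's own proof is the one-line remark that the statement follows ``directly from the previous proposition and the definition of $\bar{\phi}$,'' and your write-up simply fills in those details (including the correct direction of the subsumption relation and the $|\bar{\phi}|\le n|\phi|$ bound). No discrepancy.
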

		
		\begin{proof} 
			Directly from the previous proposition and the definition
			of $\bar{\phi}$. 
		\end{proof}
		
		\begin{remark*}
			As we are interested in partial assignments which could be extended
			to an entire model for the 3--CNF $F$, we can only consider the $\bar{\phi}_{k}$
			sets for $k>3$ without changing anything further.
		\end{remark*}
		
		\begin{example}
			Take $n=5$ and 8 models $(m_{i})_{1\leq i\leq8}$ in $\phi$.\\
			 $\phi=\{00111,01011,10101,11100,11111,10011,01101,00100\}$ (as
			Example \ref{ex1})\\
			 The 3--closure of the candidate formula has been established:\\
			 $F_{\phi}=(x_{1}x_{2}x_{3})(\bar{x}_{1}\bar{x}_{2}x_{3})(x_{1}\bar{x}_{2}x_{5})(\bar{x}_{1}x_{2}x_{5})(x_{3}x_{4})(x_{3}x_{5})(\bar{x}_{4}x_{5})$\\

			Let build the sets $(\bar{\phi})_{k}$ for $4\le k\le n(=5)$:
			\begin{itemize}
			\item {$k=4$\\
			 $\phi_{4}=\{0011,0101,1010,1110,1111,1001,0110,0010\}\ \bar{m}_{1,4}=0010\in\phi_{4}$
			($=m_{8,4}$ so $\bar{m}_{8,4}=m_{1,4}\in\phi_{4}$)\\
			 $\bar{m}_{2,4}=0100\notin\phi_{4}$ ($\in\bar{\phi}_{4}$)\\
			 and so on until $\bar{\phi}_{4}=\{0100,1011,1000,0111\}$} 
			\item $k=5$\\
			 In the same way, $\bar{\phi}_{5}=\{00110,01010,10100,11101,11110,10010,01100,00101\}$ 
			\end{itemize}
			Hence $\bar{\phi}=\bar{\phi}_{4}\cup\bar{\phi}_{5}$
		\end{example}

	\subsection
	{An equivalent formulation of the (co)Inverse 3--SAT question: Is
	there a partial assignment $I\in\bar{\phi}$ such that the 3--closure
	of $F_{\phi|I}$ does not contain the empty clause ?}
	
		\begin{proposition} 
			The (co)Inverse 3--SAT question \textquotedbl{}Is
			there a model of $F_{\phi}^{3}\notin\phi$ ?\textquotedbl{} is equivalent
			to the question \textquotedbl{}Is there a partial assignment $I\in\bar{\phi}$
			such that the 3--closure of $F_{\phi|I}$ does not contain the empty
			clause ?\textquotedbl{} 
		\end{proposition}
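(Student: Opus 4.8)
The plan is to prove the equivalence directly, one implication at a time, leaning on two facts already established: first, that $F_{\phi}$ is logically equivalent to $F_{\phi}^{3}$ (the Remark following the Notation), so that ``$m$ is a model of $F_{\phi}^{3}$'' may be freely replaced by ``$m$ is a model of $F_{\phi}$''; second, that $F_{\phi|I}$ is unsatisfiable iff its 3--closure contains the empty clause (the proposition to that effect, proved just above), so that ``the 3--closure of $F_{\phi|I}$ does not contain the empty clause'' is merely a restatement of ``$F_{\phi|I}$ is satisfiable''. With these two translations in hand the statement reduces to: there is a model of $F_{\phi}$ outside $\phi$ iff there is some $I\in\bar{\phi}$ with $F_{\phi|I}$ satisfiable.

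For the forward implication I would start from a model $m$ of $F_{\phi}$ with $m\notin\phi$. Applying part $(2)$ of the proposition about the sets $\bar{\phi}_{k}$ (an assignment of the $n$ variables is outside $\phi$ iff one of its prefixes lies in some $\bar{\phi}_{k}$), I obtain $k\le n$ and a partial assignment $I\in\bar{\phi}_{k}\subseteq\bar{\phi}$ which is exactly the restriction of $m$ to its first $k$ values. Since $m$ satisfies $F_{\phi}$ and extends $I$, the restriction of $m$ to the $n-k$ variables unset by $I$ satisfies $F_{\phi|I}$: every clause $c$ of $F_{\phi}$ is either deleted when forming $F_{\phi|I}$ (because $I$ already makes one of its literals true) or survives as $c_{|I}$, which the rest of $m$ must satisfy since $m$ satisfies $c$. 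Hence $F_{\phi|I}$ is satisfiable, and by the cited proposition its 3--closure contains no empty clause; this $I$ witnesses the second question.

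For the backward implication I would take $I\in\bar{\phi}$ whose induced formula has an empty-clause-free 3--closure, hence (again by the cited proposition) with $F_{\phi|I}$ satisfiable. Fix a model $J$ of $F_{\phi|I}$ on the variables unset by $I$ and set $m:=I\cup J$, a total assignment of the $n$ variables. Then $m$ is a model of $F_{\phi}$: for any $c\in F_{\phi}$, either $I$ already satisfies $c$, or $c_{|I}$ is a (necessarily nonempty, since $F_{\phi|I}$ is satisfiable) clause of $F_{\phi|I}$ satisfied by $J$, and in either case $m$ satisfies $c$. So $m$ is a model of $F_{\phi}$, equivalently of $F_{\phi}^{3}$. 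Finally, because $I\in\bar{\phi}$, part $(1)$ of the proposition about $\bar{\phi}$ gives that every total extension of $I$ lies outside $\phi$, so $m\notin\phi$, and $m$ is the sought model of $F_{\phi}^{3}$ not in $\phi$.

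The argument is almost entirely bookkeeping; the substantive content sits in the earlier propositions (the unsatisfiability-versus-empty-clause characterization and the structural facts about $\bar{\phi}$), which I am allowed to assume. The only places that call for care are remembering to switch between $F_{\phi}^{3}$ (in which the problem is phrased) and $F_{\phi}$ (on which all the partial-assignment machinery was built), and handling the correspondence between a total model of $F_{\phi}$ extending $I$ and a model of the induced formula $F_{\phi|I}$ cleanly in both directions --- in particular noting that satisfiability of $F_{\phi|I}$ already forbids any surviving clause from being empty. I do not anticipate a genuine obstacle.
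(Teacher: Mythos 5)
Your proof is correct and follows essentially the same route as the paper: both directions rest on the same two ingredients, namely the proposition that $F_{\phi|I}$ is unsatisfiable iff its 3--closure contains the empty clause, and the structural facts that $\bar{\phi}$ subsumes exactly the assignments outside $\phi$ and that no extension of an element of $\bar{\phi}$ lies in $\phi$. Your write-up is merely a little more explicit than the paper's about the correspondence between models of $F_{\phi|I}$ and total models of $F_{\phi}$ extending $I$ (the paper handles one direction of this by a brief contradiction), but the argument is the same.
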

		
		\begin{proof}
			If it exists a partial assignment $I\in\bar{\phi}$ such that the
			3--closure of $F_{\phi|I}$ does not contain the empty clause then
			: \\
			 1) All extensions of $I$ on the rest of the $n$ variables are
			not in $\phi$ (from Prop. 3.4).\\
			 2) $F_{\phi|I}$ is satisfiable (from Prop. 3.2).\\
			 Then $I$ extended (concatenated) with a model of $F_{\phi|I}$
			is a model of $F_{\phi}^{3}\notin\phi$.\\
			 If it exists $m$, a model of $F_{\phi}^{3}\notin\phi$ ($m$
			is also a model of $F_{\phi}$) then it exists a partial assignment
			$I_{m}\in\bar{\phi}$ which subsumes $m$ (since $\bar{\phi}$ is
			a set of partial assignments which subsume all assignment $\notin\phi$).
			Then $F_{\phi|I_{m}}$ is satisfiable (if not, no extension of $I_{m}$
			can satisfy neither $F_{\phi}$ nor $F_{\phi}^{3}$: contradiction)
			and its 3--closure does not contain the empty clause.
		\end{proof}

\section{The algorithm}

	\paragraph*
	{Input:\textmd{ $\phi$, a set of models over $n$ variables.} }
	
	\paragraph*
	{Step 1: \textmd{Compute $F_{\phi}$, the 3--closure of the candidate
	formula.} }
	
	\paragraph*
	{Step 2: \textmd{Compute $\bar{\phi}$, a set of partial assignments
	subsuming all assigments $\notin\phi$.} }
	
	\paragraph*
	{Step 3: \textmd{For each partial assignment $I\in\bar{\phi}$, compute
	the 3--closure of $F_{\phi|I}$ and check whether it contains the
	empty clause.} }
		
	\paragraph*
	{Output: \textmd{Yes or No, answering the question: Is there a partial
	assignment $I\in\bar{\phi}$ such that the 3--closure of $F_{\phi|I}$
	does not contain the empty clause ? }}
	
	\begin{proposition} 
		This algorithm lets solve the (co)Inverse 3--SAT
		problem. Each step can be computed in polynomial time. 
	\end{proposition}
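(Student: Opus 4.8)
The plan is to obtain this proposition essentially for free from the groundwork already laid: its correctness is exactly the content of the equivalent-formulation proposition (Proposition 3.7), and the polynomial-time bound for each individual step has already been established, so the task reduces to assembling these pieces and verifying one auxiliary point, namely that $\bar{\phi}$ — hence the number of subproblems solved in Step 3 — is of polynomial size.

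For \textbf{correctness}, I would argue as follows. Proposition 3.7 states that ``is there a model of $F_{\phi}^{3}$ outside $\phi$?'' is equivalent to ``is there $I\in\bar{\phi}$ such that the $3$--closure of $F_{\phi|I}$ does not contain the empty clause?''. Step 3 examines every $I\in\bar{\phi}$ and the Output returns ``Yes'' exactly when some such $I$ is found; hence the Output decides the (co)Inverse $3$--SAT question. Since $F_{\phi}^{3}$ is the most restrictive $3$--CNF formula satisfied by all models of $\phi$, the Inverse $3$--SAT instance is a ``yes'' instance iff $F_{\phi}^{3}$ has no model outside $\phi$, i.e. iff the (co) question answers ``no''; so the algorithm also decides Inverse $3$--SAT.

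For the \textbf{complexity of each step}: in Step 1 one first constructs $F_{\phi}^{3}$ — there are $O(n^{3})$ unordered triples of variables, $8$ sign patterns per triple, and for a fixed $3$--clause one tests in $O(n|\phi|)$ time whether every model of $\phi$ satisfies it, so $F_{\phi}^{3}$ has $O(n^{3})$ clauses and is produced in polynomial time — and then its $3$--closure $F_{\phi}$ is computed in polynomial time by Proposition 3.1. In Step 2, $\bar{\phi}$ is computed in polynomial time and satisfies $|\bar{\phi}|\le n|\phi|$ by Proposition 3.6. In Step 3, for each of the at most $n|\phi|$ assignments $I\in\bar{\phi}$, forming $F_{\phi|I}$ is a single pass over the clauses of $F_{\phi}$; by Proposition 3.2 the $3$--closure of $F_{\phi|I}$ coincides with the $3$--limited closure of $F_{\phi|I}$, which is computable in polynomial time by the result of Van Gelder \cite{Vangelder1995} cited in the Preliminaries; and scanning the resulting clause set for $(\varnothing)$ is immediate. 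Summing over the at most $n|\phi|$ choices of $I$ keeps Step 3 polynomial, so the three steps together run in polynomial time.

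The only part that carries real weight is \textbf{Step 3}, and specifically the appeal to Proposition 3.2: it is genuinely non-obvious that the unrestricted $3$--closure of the induced formula $F_{\phi|I}$ is already reached by $3$--limited resolution alone — without that, the algorithm would in general compute a proper subset of the $3$--closure and could wrongly fail to detect an empty clause, breaking both correctness (through the satisfiability criterion of Proposition 3.4 that underlies Proposition 3.7) and the polynomial-time guarantee. Everything else is routine aggregation: once Propositions 3.1, 3.2, 3.6 and 3.7 are granted, the present proposition follows by composing their polynomial bounds — with the bound $|\bar{\phi}|\le n|\phi|$ controlling the outer loop — and by quoting the equivalence for correctness.
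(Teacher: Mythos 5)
Your proposal is correct and follows essentially the same route as the paper: the paper's own proof is a one-paragraph aggregation that quotes the equivalence of Proposition 3.7 for correctness and the earlier polynomial-time results (with the bound $|\bar{\phi}|\le n|\phi|$) for complexity, exactly as you do. Your version is merely more explicit about the cost of building $F_{\phi}^{3}$ and about why Proposition 3.2 is the load-bearing step in the loop of Step 3.
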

	
	\begin{proof}
		This algorithm obviously finishes. It outputs the answer to the question:
		Is there a partial assignment $I\in\bar{\phi}$ such that the 3--closure
		of $F_{\phi|I}$ does not contain the empty clause ? which is equivalent
		to the classical (co)Inverse 3--SAT question. Its polynomial-time
		computation comes directly from the previous results of the article
		(since $|\bar{\phi}|\le n|\phi|$, there is no exponential increase
		in size).
	\end{proof}
	
	\begin{example}
		Take $n=5$ and 8 models $(m_{i})_{1\leq i\leq8}$ in $\phi$.\\
		 $\phi=\{00111,01011,10101,11100,11111,10011,01101,00100\}$ (as
		Example \ref{ex1} and 2)\\
		 $F_{\phi}$ and $\bar{\phi}$ have been found: \\
		 $F_{\phi}=(x_{1}x_{2}x_{3})(\bar{x}_{1}\bar{x}_{2}x_{3})(x_{1}\bar{x}_{2}x_{5})(\bar{x}_{1}x_{2}x_{5})(x_{3}x_{4})(x_{3}x_{5})(\bar{x}_{4}x_{5})$\\
		 $\bar{\phi}=\{0100,1011,1000,0111,00110,01010,10100,11101,11110,10010,01100,00101\}$\\
		 $F_{|0100}=(\varnothing)$ but $F_{|1011}=(x_{5})$ so the candidate
		formula has at least one model $m\notin\phi$ ($m=10111$).
	\end{example}

\section{Conclusion}

	The (co)Inverse 3--SAT problem can be solved in polynomial time.


\bibliographystyle{amsplain} 
\bibliographystyle{amsplain}
\bibliography{InvSAT_Labouze}

\end{document}